\begin{document}
\title{Maximal Ratio Transmission\\ in Wireless Poisson Networks\\ under Spatially Correlated Fading Channels}
\author{
\IEEEauthorblockN{George~C.~Alexandropoulos and Marios Kountouris}
\IEEEauthorblockA{Mathematical and Algorithmic Sciences Lab, France Research Center, Huawei Technologies Co$.$ Ltd$.$, \\20 Quai du Point du Jour, 92100 Boulogne-Billancourt, France}
emails: \{george.alexandropoulos, marios.kountouris\}@huawei.com}

\maketitle

\begin{abstract}
The downlink of a wireless network where multi-antenna base stations (BSs) communicate with single-antenna mobile stations (MSs) using maximal ratio transmission (MRT) is considered here. The locations of BSs are modeled by a homogeneous Poisson point process (PPP) and the channel gains between the multiple antennas of each BS and the single antenna of each MS are modeled as spatially arbitrarily correlated Rayleigh random variables. We first present novel closed-form expressions for the distribution of the power of the interference resulting from the coexistence of one intended and one unintended MRT over the considered correlated fading channels. The derived expressions are then used to obtain closed-form expressions for the success probability and area spectral efficiency of the wireless communication network under investigation. Simulation results corroborate the validity of the presented expressions. A key result of this work is that the effect of spatial correlation on the network throughput may be contrasting depending on the density of BSs, the signal-to-interference-plus-noise ratio (SINR) level, and the background noise power. 
\end{abstract}
\thispagestyle{empty}
\IEEEpeerreviewmaketitle

\section{Introduction}\label{sec:Intro}
Multiple-input multiple-output (MIMO) systems have been instrumental in increasing the spectral efficiency and impro\-ving the reliability of wireless links by combating channel fading and potentially reducing interference. Several multi-antenna techniques have been proposed, implemented, and standardized over the past decade, including diversity reception, beamforming, and spatial multiplexing. Diversity-oriented schemes enable multi-antenna transceivers to enhance the system performance by exploiting the diversity provided by the multiple channel fading variations. A classical antenna diversity technique is maximal ratio combining (MRC) \cite{MRC}, where the signals from the multiple receive antennas are weighted such that the signal-to-noise ratio (SNR) of their sum is maximized in the absence of interference, or when interference is treated as background noise. Maximum ratio transmission (MRT) \cite{J:Lo_MRT} is the dual of MRC at the transmitter side, i$.$e$.$ the transmit antenna weights are matched to channel fading, and exhibits the same performance. 

It has been long recognized that the theoretical performance gains of multi-antenna communication are degraded in practice due to \emph{correlated fading} \cite{J:Zhang, J:Lombardo1999, J:Loyka2001, Chuah02,J:Tulino2005, J:Alexandg2009}. Spatially correlated fading channels are usually encountered in multi-antenna systems employing not sufficiently wide separated antennas or with insufficient scattering around the transmitter. While spatial correlation has been considered as a drawback in both single- and multi-user MIMO systems (see e$.$g$.$ \cite{C:Park2005}), correlated fading may have a positive impact on spatial diversity techniques if opportunistic multi-user scheduling is employed \cite{C:Kim2008}.

The performance of diversity-oriented multi-antenna techniques in multi-cell networks has not been extensively studied until very recently, mainly due to significant difficulties in characterizing and modeling the co-channel/out-of-cell interference. Recent advances in spatial network modeling using tools from stochastic geometry were key enablers for analyzing large spatial networks, departing from the classical simplifying assumption of Gaussian interference. Several recent papers have studied multi-antenna processing techniques in wireless networks with Poisson distributed interferers \cite{Hunter08,C:Jindal_2009,OC10,Louie11}, ignoring, however, the effect of spatially correlated fading. In \cite{ Haenggi12, Tanbourgi}, the effect of spatial interference correlation across diversity branches on the performance of single-input multiple-output (SIMO) links in Poisson networks was quantified. These works showed that multi-antenna diversity techniques suffer a diversity loss when spatial interference correlation is properly accounted for. Despite this progress, the performance characte\-rization of multi-antenna techniques in spatially correlated fading channels remains open, and is the main focus of this paper. In contrast to our paper, all aforementioned papers consider spatially uncorrelated Rayleigh fading, i$.$e$.$ channel vectors are spatially white random vectors, and interference correlation occurs due to the common interference locations.

In this work, we investigate the effect of spatially arbitrarily correlated Rayleigh fading on the performance of MRT in wireless Poisson networks, where interferers form a  Poisson point process (PPP) and fading correlation at the antenna branches is present only at the transmitter. The latter consideration describes, for instance, a network set up in which multiple antennas are placed at a high-point base station (BS), while the mobile station (MS) is located in a rich scattering surrounding. We provide analytical expressions for the success probability and area spectral efficiency as a means to capture the effect of spatial fading correlation on the network performance. For that, we present novel closed-form expressions for the distribution of the squared norm of a normalized version of the dot product of two uncorrelated random vectors, each having correlated elements described by a common covariance matrix. The validity of our theoretical analysis is corroborated through comparisons with equivalent computer simulations. 

\textit{Notation:} Vectors and matrices are denoted by boldface lowercase letters and boldface capital letters, respectively. The transpose conjugate is denoted by $(\cdot)^{\rm H}$, while $\mathbf{I}_{n}$ ($n\geq2$) is the $n\times n$ identity matrix. $||\mathbf{a}||$ stands for the Euclidean norm of $\mathbf{a}$, $[\mathbf{a}]_n$ represents $\mathbf{a}$'s $n$th element, $[\mathbf{A}]_{i,j}$ represents $\mathbf{A}$'s $(i,j)$th element, and ${\rm diag}\{\mathbf{a}\}$ denotes a square diagonal matrix with $\mathbf{a}$'s elements in its main diagonal. $\mathcal{N}$, $\mathcal{R}$ and $\mathcal{C}$ represent the natural, real, and complex number sets, respectively, whereas $\mathbb{E}\{\cdot\}$ is the expectation operator and $\mathbb{P}[\cdot]$ represents probability. The Laplace transform (LT) of RV $x$ is denoted as $\mathcal{L}_x\{\cdot\}$ and $x\sim\mathcal{C}\mathcal{N}\left(0,\sigma^{2}\right)$ indicates that $x$ is a circularly-symmetric complex Gaussian RV with zero mean and variance $\sigma^{2}$. Finally, $\Gamma(\cdot)$ is the Gamma function \cite[eq. (8.310/1)]{B:Gra_Ryz_Book}, ${\rm Ei}(\cdot)$ is the exponential integral \cite[eq. (8.211/1)]{B:Gra_Ryz_Book}, and $_2F_1(\cdot,\cdot;\cdot;\cdot)$ denotes the Gauss Hypergeometric function \cite[eq. (9.14)]{B:Gra_Ryz_Book}.

\section{System and Channel Model}\label{sec:System_Model}
We consider the downlink of a wireless network, as shown in Fig.~\ref{Fig:MultiUser_MISO_System}, where BSs are arranged according to a spatial homogeneous PPP $\Phi \subset \mathcal{R}^2$ of density $\lambda$ in the Euclidean plane. Each BS is equipped with $n_{\rm T}\geq2$ transmit antennas and communicates with one single-antenna MS at a fixed distance. Our results can be easily extended for random link distances, e$.$g$.$ assuming a geographically nearest BS connectivity model. 
All simultaneous downlink transmissions are assumed perfectly synchronized, and each BS $k$, with $k\in\mathcal{N}_{+}$, processes individually its complex symbols $s_k$ with a linear precoding vector $\mathbf{f}_k\in\mathcal{C}^{n_{\rm T}}$ before transmitting. It is also assumed that for each value of $k$, $||\mathbf{f}_k||=1$ and $\mathbb{E}\{|s_k|^2\} \leq {\rm P}$, where ${\rm P}$ denotes the transmit power of each BS. 
\begin{figure}[!t]
\centering
\includegraphics[keepaspectratio,width=2.55in]{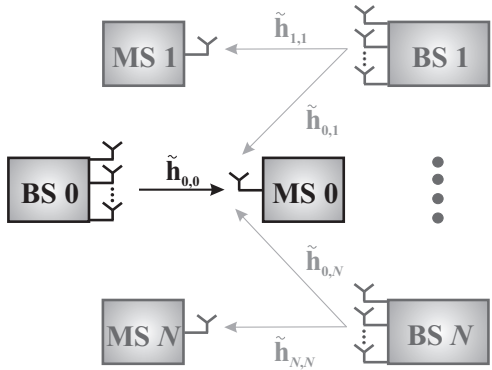}
\caption{The considered downlink wireless network. BS $0$ communicates with the typical MS $0$ under the presence of interference links (only $N$ of the interference links are illustrated in this figure for exposition convenience).}
\label{Fig:MultiUser_MISO_System}
\end{figure}

Let $\mathbf{\tilde{h}}_{k,\ell}\in\mathcal{C}^{n_{\rm T}}$ be the narrowband channel vector for the wireless link between MS $k$ and BS $\ell$ and also let $r_{k,\ell}$ denote the Euclidean distance between these nodes. The baseband received signal at MS $k$ can be mathematically expressed as 
\begin{equation}\label{Eq:Received_Signal}
y_k = r_{k,k}^{-\alpha /2}\mathbf{\tilde{h}}^{\rm H}_{k,k}\mathbf{f}_ks_k + \mathcal{I}_k + n_k
\end{equation}
where the standard power-law pathloss model $l(r)=r^{-\alpha}$ with common pathloss exponent $\alpha>2$ for all links is assumed, $\mathcal{I}_k$ is the cumulative interference from all BSs except the associated BS $k$, which is given by 
\begin{equation}\label{Eq:Interference_k}
\mathcal{I}_k = \sum_{i\in\Phi\backslash\{k\}} r_{k,i}^{-\alpha /2}\mathbf{\tilde{h}}_{k,i}^{\rm H}\mathbf{f}_is_i,
\end{equation}
and $n_k\sim\mathcal{C}\mathcal{N}(0,\sigma^2) $ denotes the additive white Gaussian noise (AWGN). Based on the considered system model described by \eqref{Eq:Received_Signal} and \eqref{Eq:Interference_k}, the signal-to-interference-plus-noise ratio (SINR) at MS $k$ can be computed as
\begin{equation}\label{Eq:SINR_k}
\gamma_k = r_{k,k}^{-\alpha}|\mathbf{\tilde{h}}_{k,k}^{\rm H}\mathbf{f}_k|^2\left(\sum_{i\in\Phi/\{k\}} r_{k,i}^{-\alpha/2}|\mathbf{\tilde{h}}_{k,i}^{\rm H}\mathbf{f}_i|^2+\frac{\sigma^2}{\rm P}\right)^{-1}.
\end{equation}
The elements of each channel gain vector are assumed to be arbitrarily correlated and in particular, each $\mathbf{\tilde{h}}_{k,\ell}$ is modeled as  
$\mathbf{\tilde{h}}_{k,\ell}=\mathbf{R}^{1/2}\mathbf{h}_{k,\ell}$, where $\mathbf{h}_{k,\ell}\in\mathcal{C}^{n_{\rm T}}$ is a standard circularly-symmetric jointly-Gaussian random vector \cite[Sec. 3.7]{B:Gallager} and $\mathbf{R}\triangleq\mathbb{E}\{\mathbf{\tilde{h}}_{k,\ell}\mathbf{\tilde{h}}_{k,\ell}^{\rm H}\}$ denotes the common covariance matrix of $\mathbf{\tilde{h}}_{k,\ell}$'s. In addition, we assume that $\mathbf{R}$ is real symmetric having ones in its main diagonal as well as that $\mathbf{\tilde{h}}_{k,\ell}$'s are independent across the different BSs and of both $r_{k,\ell}$'s and $n_k$'s.  

In this paper, we assume that each BS $k$ knows perfectly $\mathbf{\tilde{h}}_{k,k}$ and treats interference as background noise. BSs do not cooperate \cite{J:Lozano_Limits2013} and each of them acts selfishly aiming at maximizing its own SNR. In order to accomplish that, each BS $k$ is assumed to perform channel-matched precoding, a$.$k$.$a$.$ MRT. Hence, $\mathbf{f}_k$ at each BS $k$ is designed as \cite[Sec. III]{J:Lo_MRT}
\begin{equation}\label{Eq:MRT_vector}
\mathbf{f}_k = \frac{\mathbf{\tilde{h}}_{k,k}}{\|\mathbf{\tilde{h}}_{k,k}\|}.
\end{equation}

\section{Performance Analysis}\label{sec:Analysis}
In this section we present analytical expressions for the success probability and the area spectral efficiency of the conside\-red downlink wireless network. The network performance is ana\-lyzed assuming a typical MS $0$ located at the origin. As a result of Palm probabilities (see \cite{J:Andrews_PPP_Cellular} and references therein for details) and the stationarity of the PPP, the statistics of signal reception at the typical MS $0$ is seen by any MS in the considered wireless network. 

\subsection{Success Probability}
The success probability of a typical MS is defined as 
\begin{equation}\label{Eq:Conditional_Coverage_definition}
P_{\rm suc} = \mathbb{P}\left[\gamma_0>\gamma_{\rm th}\right]
\end{equation}
where $\gamma_{\rm th}$ denotes a predetermined SINR threshold. By substituting \eqref{Eq:MRT_vector} into \eqref{Eq:SINR_k} for MS $0$ and then into \eqref{Eq:Conditional_Coverage_definition}, $P_{\rm suc}$ can be expressed using the cumulative distribution function (CDF) of $\|\mathbf{\tilde{h}}_{0,0}\|^2$ and the expectation of the real RV $\Psi_0$ that represents the aggregate interference, and is given by
\begin{equation}\label{Eq:Psi_0}
\Psi_0 = \sum_{i\in\Phi/\{0\}} r_{0,i}^{-\alpha}\left|g_i\right|^2
\end{equation}
with the complex RV $g_i$ defined as
\begin{equation}\label{Eq:g_variable}
g_i = \frac{\mathbf{\tilde{h}}_{0,i}^{\rm H}\mathbf{\tilde{h}}_{i,i}}{\|\mathbf{\tilde{h}}_{i,i}\|}.
\end{equation}
In particular, $P_{\rm suc}$ can be obtained as 
\begin{equation}\label{Eq:Coverage_2}
P_{\rm suc} = 1-\mathbb{E}_{\Psi_0}\left\{F_{\|\mathbf{\tilde{h}}_{0,0}\|^2}\left[r_{0,0}^{\alpha}\gamma_{\rm th}\left(\Psi_0+\frac{\sigma^2}{\rm P}\right)\right]\right\}
\end{equation}
where $\mathbb{E}_{\Psi_0}\{\cdot\}$ represents the expectation of $\Psi_0$. By setting the Nakagami parameter $m=1$ in \cite[eq. (20)]{J:Zhang} (resulting in the Rayleigh case) and integrating, a closed-form expression for the CDF of $\|\mathbf{\tilde{h}}_{0,0}\|^2$ is easily derived as
\begin{equation}\label{Eq:CDF}
F_{\|\mathbf{\tilde{h}}_{0,0}\|^2}(x) = \sum_{\ell=1}^{n_{\rm T}} \prod_{\substack{i=1\\ i \neq \ell}}^{n_{\rm T}}\left(1-\frac{\left[\boldsymbol{\mu}\right]_i}{\left[\boldsymbol{\mu}\right]_\ell} \right)^{-1}
\left[1-\exp\left(-\frac{x}{\left[\boldsymbol{\mu}\right]_\ell}\right)\right]
\end{equation}
where $\boldsymbol{\mu}\in\mathcal{R}^{n_{\rm T}}$ contains the distinct eigenvalues of $\mathbf{R}$ in decreasing order (the non-diagonal elements of $\mathbf{R}$ lie in $[0,1)$). Hence, plugging \eqref{Eq:CDF} into \eqref{Eq:Coverage_2} and using the definition of the LT of $\Psi_0$, the following expression for $P_{\rm suc}$ can be deduced
\begin{equation}\label{Eq:Coverage_3}
\begin{split}
P_{\rm suc} =& 1-\sum_{\ell=1}^{n_{\rm T}} \prod_{\substack{i=1\\ i \neq \ell}}^{n_{\rm T}}\left(1-\frac{\left[\boldsymbol{\mu}\right]_i}{\left[\boldsymbol{\mu}\right]_\ell} \right)^{-1}
\\&\times\left[1-\exp\left(-\frac{r_{0,0}^{\alpha}\sigma^2\gamma_{\rm th}}{{\rm P}\left[\boldsymbol{\mu}\right]_\ell}\right)\mathcal{L}_{\Psi_0}\left(\frac{r_{0,0}^{\alpha}\gamma_{\rm th}}{\left[\boldsymbol{\mu}\right]_\ell}\right)\right]
\end{split}.
\end{equation}

To obtain a closed-form expression for $P_{\rm suc}$ based on the latter expression \eqref{Eq:Coverage_3}, we first derive an exact as well as an approximate analytical expression for the probability density function (PDF) of $|g_i|^2$ $\forall$ $i\in\mathcal{N}_{+}$ by means of the following two theorems. 
\newtheorem{theorem}{Theorem}
\begin{theorem}\label{Theorem:Exact_Distribution_y}\rm 
The PDF of the RV ${\rm g}\triangleq|g_i|^2$ $\forall$ $i\in\mathcal{N}_{+}$ for the special case where $n_{\rm T}=2$ is given by    
\begin{equation}\label{Eq:Exact_PDF_g}
f_{\rm g}(x) = \frac{1-\rho^2}{2\rho}\left[\mathcal{F}\left(1+\rho,x\right)-\mathcal{F}\left(1-\rho,x\right)\right]
\end{equation}
where $\rho\triangleq\left[\mathbf{R}\right]_{1,2}\in[0,1)$ and function $\mathcal{F}(\eta,x)$, with $\eta>0$ and $x\geq0$, is defined as
\begin{equation}\label{Eq:Function_F}
\begin{split}
\mathcal{F}(\eta,x) =& \frac{\eta}{4(2-\eta)}\exp\left(-\frac{x}{\eta}\right)
\\&-\frac{2-x}{8}\exp\left(-\frac{x}{2}\right){\rm Ei}\left[x\left(\frac{1}{2}-\frac{1}{\eta}\right)\right].
\end{split}
\end{equation}
\end{theorem}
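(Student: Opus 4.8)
The plan is to compute the density of $\mathrm{g}=|g_i|^2$ by a conditioning argument, exploiting the fact that, once the ``desired'' channel $\mathbf{\tilde{h}}_{i,i}$ is fixed, $g_i$ in \eqref{Eq:g_variable} becomes a linear functional of the independent ``interfering'' channel $\mathbf{\tilde{h}}_{0,i}$. First I would condition on $\mathbf{\tilde{h}}_{i,i}$ and set $\mathbf{c}\triangleq\mathbf{\tilde{h}}_{i,i}/\|\mathbf{\tilde{h}}_{i,i}\|$, so that $g_i=\mathbf{\tilde{h}}_{0,i}^{\rm H}\mathbf{c}$. Since $\mathbf{\tilde{h}}_{0,i}\sim\mathcal{C}\mathcal{N}(\mathbf{0},\mathbf{R})$ is independent of $\mathbf{\tilde{h}}_{i,i}$, the conditional law of $g_i$ is zero-mean complex Gaussian with variance equal to the Rayleigh quotient $\omega\triangleq\mathbf{c}^{\rm H}\mathbf{R}\mathbf{c}=\mathbf{\tilde{h}}_{i,i}^{\rm H}\mathbf{R}\mathbf{\tilde{h}}_{i,i}/\|\mathbf{\tilde{h}}_{i,i}\|^2$. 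Consequently $\mathrm{g}$ is conditionally exponential, i.e. $f_{\mathrm{g}\mid\omega}(x)=\omega^{-1}\exp(-x/\omega)$, and the target density is the mixture $f_{\mathrm{g}}(x)=\mathbb{E}_{\omega}\{\omega^{-1}\exp(-x/\omega)\}$. The remaining work is therefore to characterize the distribution of the scalar $\omega$.

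Next I would determine the law of $\omega$. Writing $\mathbf{\tilde{h}}_{i,i}=\mathbf{R}^{1/2}\mathbf{h}$ with $\mathbf{h}$ standard Gaussian gives $\omega=\mathbf{h}^{\rm H}\mathbf{R}^{2}\mathbf{h}/(\mathbf{h}^{\rm H}\mathbf{R}\mathbf{h})$, and $\omega$ necessarily lies in $[1-\rho,1+\rho]$, the interval between the two eigenvalues of $\mathbf{R}$. Specializing to $n_{\rm T}=2$, I would eigen-decompose $\mathbf{R}=\mathbf{U}\,{\rm diag}\{1+\rho,1-\rho\}\mathbf{U}^{\rm H}$ and use the unitary invariance of $\mathbf{h}$ to replace $\mathbf{U}^{\rm H}\mathbf{h}$ by an equidistributed standard Gaussian vector, whose entries have squared magnitudes $u,v$ that are independent unit-mean exponentials. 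This reduces $\omega$ to the explicit ratio $\omega=[(1+\rho)^2u+(1-\rho)^2v]/[(1+\rho)u+(1-\rho)v]$. A change of variables to the total mass and the convex weight $p=(1+\rho)u/[(1+\rho)u+(1-\rho)v]$ (i.e. the split of two independent exponentials with unequal means) then yields the closed-form density $f_{\omega}(\omega)=\tfrac{1-\rho^2}{2\rho}(2-\omega)^{-2}$ on $[1-\rho,1+\rho]$, which one checks integrates to one.

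Substituting $f_{\omega}$ into the mixture gives $f_{\mathrm{g}}(x)=\tfrac{1-\rho^2}{2\rho}\int_{1-\rho}^{1+\rho}\frac{\exp(-x/\omega)}{\omega(2-\omega)^2}\,d\omega$, which already exhibits the prefactor of \eqref{Eq:Exact_PDF_g}; it then remains to evaluate the integral and identify $\mathcal{F}$ in \eqref{Eq:Function_F} as an antiderivative of the integrand so that the result collapses to the endpoint difference $\mathcal{F}(1+\rho,x)-\mathcal{F}(1-\rho,x)$. To this end I would partial-fraction $\frac{1}{\omega(2-\omega)^2}=\frac{1/4}{\omega}+\frac{1/4}{2-\omega}+\frac{1/2}{(2-\omega)^2}$ and integrate $\exp(-x/\omega)$ against each term; the substitution $\omega\mapsto 1/\omega$ together with one integration by parts on the $(2-\omega)^{-2}$ piece produces exponential-integral terms, and after collecting them one obtains the stated $\mathcal{F}(\eta,x)$ featuring the factors $\exp(-x/2)$, the polynomial $2-x$, and ${\rm Ei}[x(\tfrac12-\tfrac1\eta)]$.

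The hard part will be this last evaluation: the three partial-fraction pieces each generate ${\rm Ei}$ contributions, and I expect the delicate point to be verifying that the $\eta$-dependent exponential-integral terms coming from the $\omega^{-1}$ and the two $(2-\omega)^{-k}$ pieces combine into the single ${\rm Ei}[x(\tfrac12-\tfrac1\eta)]$ term of \eqref{Eq:Function_F}, with all $\eta$-independent constants of integration cancelling in the endpoint difference. A clean way to certify the final form, sidestepping the bookkeeping of the indefinite integrals, is simply to differentiate the claimed $\mathcal{F}(\eta,x)$ with respect to $\eta$ and confirm that $\partial_\eta\mathcal{F}(\eta,x)=\exp(-x/\eta)/[\eta(2-\eta)^2]$, which is exactly the integrand evaluated at $\omega=\eta$; this, together with the value of $f_{\omega}$, is what turns the integral into $\tfrac{1-\rho^2}{2\rho}[\mathcal{F}(1+\rho,x)-\mathcal{F}(1-\rho,x)]$.
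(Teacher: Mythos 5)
Your proposal is correct and follows essentially the same route as the paper's Appendix~A: condition on $\mathbf{\tilde{h}}_{i,i}$ to get a conditional exponential law for $|g_i|^2$ with scale equal to the generalized Rayleigh quotient $\mathbf{h}^{\rm H}\mathbf{R}^2\mathbf{h}/\mathbf{h}^{\rm H}\mathbf{R}\mathbf{h}$, obtain its density $\tfrac{1-\rho^2}{2\rho}(2-y)^{-2}$ on $(1-\rho,1+\rho)$, and evaluate the resulting mixture integral. The only differences are that you derive the quotient density explicitly (via the convex-weight reparametrization of two exponentials, which checks out) where the paper cites an external reference, and you certify the final integral by verifying $\partial_\eta\mathcal{F}(\eta,x)=e^{-x/\eta}/[\eta(2-\eta)^2]$ (which indeed holds) where the paper invokes a table of integrals.
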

\begin{proof}
The proof is provided in Appendix~\ref{App:Proof_Theorem_1}.
\end{proof}
\begin{theorem}\label{Theorem:Approximate_Distribution_y}\rm
The PDF of ${\rm g}$ for the general case $n_{\rm T}\geq2$ can be approximated by the following exponential PDF  
\begin{equation}\label{Eq:Approximate_PDF_g}
f_{\rm g}(x) = \frac{1}{\sigma_{\rm g}}\exp\left(-\frac{x}{\sigma_{\rm g}}\right)
\end{equation}
where the scale parameter $\sigma_{\rm g}$ is given by
\begin{equation}\label{Eq:sigma_g}
\sigma_{\rm g} = \sum_{n=1}^{n_{\rm T}}\frac{\left[\boldsymbol{\mu}\right]_n^2}{\left[\boldsymbol{\mu}\right]_n}.
\end{equation}
\end{theorem}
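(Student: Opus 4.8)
The plan is to exploit the conditional Gaussianity of $g_i$ given the direction of the serving channel, which turns $\mathrm{g}$ into a continuous mixture of exponential random variables, and then to collapse that mixture into a single exponential by matching its first moment. Since an exponential PDF with scale $\sigma_{\rm g}$ has mean $\sigma_{\rm g}$, the whole task reduces to computing $\mathbb{E}\{\mathrm{g}\}$ and identifying it with the claimed scale parameter.

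First I would write $g_i = \mathbf{\tilde{h}}_{0,i}^{\rm H}\mathbf{u}$ with the unit vector $\mathbf{u}\triangleq\mathbf{\tilde{h}}_{i,i}/\|\mathbf{\tilde{h}}_{i,i}\|$, noting that $\mathbf{u}$ is independent of $\mathbf{\tilde{h}}_{0,i}$ by the channel model. Conditioned on $\mathbf{u}$, the quantity $g_i$ is a fixed linear functional of the zero-mean jointly-Gaussian vector $\mathbf{\tilde{h}}_{0,i}\sim\mathcal{C}\mathcal{N}(\mathbf{0},\mathbf{R})$, hence $g_i\mid\mathbf{u}\sim\mathcal{C}\mathcal{N}(0,\mathbf{u}^{\rm H}\mathbf{R}\mathbf{u})$ and $\mathrm{g}\mid\mathbf{u}$ is exponentially distributed with mean $\mathbf{u}^{\rm H}\mathbf{R}\mathbf{u}$. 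The exact density is therefore the mixture
\begin{equation}
f_{\rm g}(x) = \mathbb{E}_{\mathbf{u}}\left\{\frac{1}{\mathbf{u}^{\rm H}\mathbf{R}\mathbf{u}}\exp\left(-\frac{x}{\mathbf{u}^{\rm H}\mathbf{R}\mathbf{u}}\right)\right\},
\end{equation}
and the approximation \eqref{Eq:Approximate_PDF_g} amounts to replacing the random scale $\mathbf{u}^{\rm H}\mathbf{R}\mathbf{u}$ by its mean $\sigma_{\rm g}$.

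Next I would evaluate that mean. Writing $\mathbf{\tilde{h}}_{i,i}=\mathbf{R}^{1/2}\mathbf{h}$ with $\mathbf{h}$ standard complex Gaussian and diagonalizing $\mathbf{R}=\mathbf{U}\,{\rm diag}\{\boldsymbol{\mu}\}\,\mathbf{U}^{\rm H}$, the unitary invariance of $\mathbf{h}$ gives
\begin{equation}
\sigma_{\rm g} = \mathbb{E}\left\{\mathbf{u}^{\rm H}\mathbf{R}\mathbf{u}\right\} = \mathbb{E}\left\{\frac{\sum_{n}[\boldsymbol{\mu}]_n^2\,|z_n|^2}{\sum_{n}[\boldsymbol{\mu}]_n\,|z_n|^2}\right\},
\end{equation}
with $z_n$ i.i.d. standard complex Gaussian. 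Applying the ratio-of-expectations approximation $\mathbb{E}\{X/Y\}\approx\mathbb{E}\{X\}/\mathbb{E}\{Y\}$ together with $\mathbb{E}\{|z_n|^2\}=1$ then yields $\sigma_{\rm g}=\big(\sum_n[\boldsymbol{\mu}]_n^2\big)/\big(\sum_n[\boldsymbol{\mu}]_n\big)$, i.e. the scale parameter of \eqref{Eq:sigma_g} (which, since ${\rm tr}(\mathbf{R})=n_{\rm T}$, also equals ${\rm tr}(\mathbf{R}^2)/n_{\rm T}$).

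I expect the genuinely hard and necessarily non-rigorous part to be the two approximation steps rather than the algebra. Collapsing the exponential mixture into a single exponential is exact only when $\mathbf{u}^{\rm H}\mathbf{R}\mathbf{u}$ is deterministic (i.e. $\mathbf{R}=\mathbf{I}_{n_{\rm T}}$), and the ratio-of-expectations substitution neglects the correlation between numerator and denominator. I would therefore argue that the spread of $\mathbf{u}^{\rm H}\mathbf{R}\mathbf{u}$ shrinks as the eigenvalues of $\mathbf{R}$ cluster and remains controlled for moderate correlation, and defer the ultimate validation of the exponential fit to the numerical comparisons reported later in the paper.
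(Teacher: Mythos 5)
Your proof is correct and follows essentially the same route as the paper's Appendix B: condition on the serving-channel direction to obtain an exponential mixture whose random scale $\mathbf{u}^{\rm H}\mathbf{R}\mathbf{u}$ is exactly the paper's generalized Rayleigh quotient $r_i=\mathbf{h}_{i,i}^{\rm H}\mathbf{R}^2\mathbf{h}_{i,i}/\mathbf{h}_{i,i}^{\rm H}\mathbf{R}\mathbf{h}_{i,i}$, collapse the mixture to a single exponential with scale $\mathbb{E}\{r_i\}$, and evaluate that mean via the eigendecomposition of $\mathbf{R}$ together with the ratio-of-expectations approximation. The only cosmetic difference is that you write the resulting scale explicitly as $\sum_n[\boldsymbol{\mu}]_n^2/\sum_n[\boldsymbol{\mu}]_n={\rm tr}(\mathbf{R}^2)/n_{\rm T}$, which is what the paper's condensed notation in \eqref{Eq:sigma_g} is intended to denote.
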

\begin{proof}
See Appendix~\ref{App:Proof_Theorem_2}. 
\end{proof}
\textit{Remark 1:} Theorem~\ref{Theorem:Approximate_Distribution_y} also holds for the special case where $\mathbf{R}=\mathbf{I}_n$ (spatially uncorrelated Rayleigh fading), i$.$e$.$ $\left[\boldsymbol{\mu}\right]_n=1$ $\forall$ $n=2,3,\ldots,n_{\rm T}$ (identical eigenvalues). For this special case, \eqref{Eq:sigma_g} simplifies to $\sigma_{\rm g}=1$ and Theorem~\ref{Theorem:Approximate_Distribution_y} provides the exact PDF of ${\rm g}$ for $n_{\rm T}\geq2$, which coincides with the equivalent result in \cite{C:Jindal_2009} and \cite{J:Shah00}. 
\begin{figure}[!t]
\centering
\includegraphics[keepaspectratio,width=3.1in]{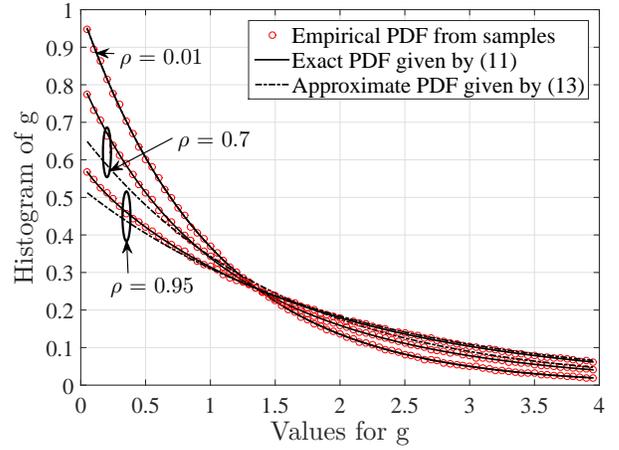}
\caption{Histogram of the positive real RV ${\rm g}$ for the special case of $n_{\rm T}=2$ and various values of the correlation coefficient $\rho$.}
\label{Fig:PDF_nT2}
\end{figure}

\begin{figure}[!t]
\centering
\includegraphics[keepaspectratio,width=3.1in]{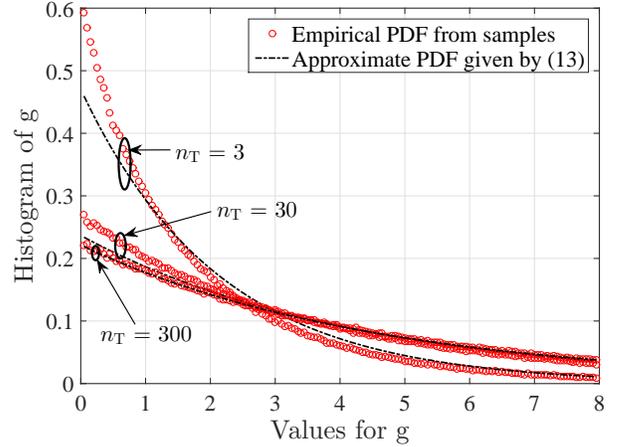}
\caption{Histogram of the positive real RV ${\rm g}$ for a correlation coefficient $\rho=0.8$ and various values of $n_{\rm T}$.}
\label{Fig:PDF_nT}
\end{figure}
The validity of the exact PDF expression in Theorem~\ref{Theorem:Exact_Distribution_y} and the tightness of the approximate PDF expression in Theorem~\ref{Theorem:Approximate_Distribution_y} are demonstrated in Figs$.$~\ref{Fig:PDF_nT2} and \ref{Fig:PDF_nT}, respectively. In these figures the histogram of ${\rm g}$ is plotted for various exponential correlation matrices, i$.$e$.$ $\left[\mathbf{R}\right]_{i,j}=\rho^{|i-j|}$ \cite{J:Loyka2001}, and various values for $n_{\rm T}$. As clearly shown in Fig$.$~\ref{Fig:PDF_nT2}, the curves for the exact PDF expression for $n_{\rm T}=2$ given by \eqref{Eq:Exact_PDF_g} match perfectly with the curves for the equivalent empirical PDF. In addition, as depicted in Fig$.$~\ref{Fig:PDF_nT}, the curves for the approximate PDF expression \eqref{Eq:Approximate_PDF_g} for $n_{\rm T}\geq2$ exhibit a good fit with the empirical ones. This fit improves as $\rho$ and/or $n_{\rm T}$ increase (or equivalently as the histogram becomes more heavy tailed). The latter substantiates that Theorem~\ref{Theorem:Approximate_Distribution_y} provides an accurate approximation for the PDF of ${\rm g}$ for multi-antenna communication networks with increased spatial fading correlation and very large numbers of antennas (massive MIMO).

Since $|g_i|^2$'s are independent and identically distributed RVs and independent from the PPP $\Phi$, the LT that appears in \eqref{Eq:Coverage_3} can be obtained using similar steps to \cite[Theorem 1]{J:Andrews_PPP_Cellular} as  
\begin{equation}\label{Eq:LT}
\mathcal{L}_{\Psi_0}\left(s\right) = \exp\left[-2\pi\lambda\int_0^\infty \mathcal{H}(s,x)f_{\rm g}(x){\rm d}x\right]
\end{equation}
where function $\mathcal{H}(s,x)$ is defined as
\begin{equation}\label{Eq:Function_H}
\begin{split}
\mathcal{H}(s,x) &\triangleq \int_0^\infty\left[1-\exp\left(-sxy^{-\alpha}\right)\right]y{\rm d}y
\\&\stackrel{(\rm{i})}{=} \frac{\left(sx\right)^{2/\alpha}}{2}\Gamma\left(1-\frac{2}{\alpha}\right).
\end{split}
\end{equation}
In \eqref{Eq:Function_H}, $(\rm{i})$ follows from the change of variables $y^{-\alpha}\rightarrow u$ and after some algebraic and calculus manipulations. Substituting \eqref{Eq:Function_H} into \eqref{Eq:LT}, then replacing either \eqref{Eq:Exact_PDF_g} or \eqref{Eq:Approximate_PDF_g}, and finally using the definition of the fractional moment, a closed-form expression for the LT of $\Psi_0$ can be obtained as
\begin{equation}\label{Eq:LT_1}
\mathcal{L}_{\Psi_0}\left(s\right) = \exp\left(-\xi\lambda s^{2/\alpha}\right)
\end{equation}
where the parameter $\xi\geq0$ is given by
\begin{equation}\label{Eq:xi}
\xi = \pi\Gamma\left(1-\frac{2}{a}\right)\mathbb{E}\{{\rm g}^{2/\alpha}\}.
\end{equation}
A closed-form exact expression for the special case of $n_{\rm T}=2$ and a closed-form analytical approximation for the general case $n_{\rm T}\geq2$ for the fractional moment $\mathbb{E}\{{\rm g}^{2/\alpha}\}$ that appears in \eqref{Eq:xi} are presented in the following two corollaries.

\newtheorem{corollary}{Corollary}
\begin{corollary}\label{Corollary:Exact_Moment}\rm
A closed-form exact expression for $\mathbb{E}\{{\rm g}^{2/\alpha}\}$ for the special case where $n_{\rm T}=2$ is given by    
\begin{equation}\label{Eq:Exact_Moment_2VSa_y}
\mathbb{E}\{{\rm g}^{2/\alpha}\} = \frac{1-\rho^2}{2\rho}\left[\mathcal{G}\left(1+\rho,\alpha\right)-\mathcal{G}\left(1-\rho,\alpha\right)\right]
\end{equation}
where function $\mathcal{G}(\eta,a)$ is defined as 
\begin{equation}\label{Eq:Function_G}
\begin{split}
\mathcal{G}(\eta,\alpha) =& \frac{\eta^{2/\alpha+1}}{4}\left\{\Gamma\left(\frac{2}{\alpha}+1\right)\left[\frac{\eta}{2-\eta}+\frac{\alpha}{\alpha+2}\right.\right.
\\&\times\left._2F_1\left(1,\frac{2}{\alpha}+1;\frac{2}{\alpha}+2;\frac{\eta}{2}\right)\right]-\frac{\alpha\eta}{4(\alpha+1)}
\\&\times\left.\Gamma\left(\frac{2}{\alpha}+2\right)\,_2F_1\left(1,\frac{2}{\alpha}+2;\frac{2}{\alpha}+3;\frac{\eta}{2}\right)\right\}
\end{split}
\end{equation}
\end{corollary}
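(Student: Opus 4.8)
The plan is to compute the fractional moment directly from its definition, $\mathbb{E}\{{\rm g}^{2/\alpha}\} = \int_0^\infty x^{2/\alpha} f_{\rm g}(x)\,{\rm d}x$, inserting the exact PDF \eqref{Eq:Exact_PDF_g} from Theorem~\ref{Theorem:Exact_Distribution_y}. By linearity of the integral the whole problem collapses to evaluating the single auxiliary integral $\mathcal{G}(\eta,\alpha) \triangleq \int_0^\infty x^{2/\alpha}\mathcal{F}(\eta,x)\,{\rm d}x$ and then setting $\eta = 1+\rho$ and $\eta = 1-\rho$; this reproduces the structure of \eqref{Eq:Exact_Moment_2VSa_y} with the prefactor $(1-\rho^2)/(2\rho)$ carried through unchanged. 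I would then split $\mathcal{F}(\eta,x)$ in \eqref{Eq:Function_F} into its exponential part and its exponential-integral part and handle the two separately.

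First I would dispatch the easy term. The integral of $x^{2/\alpha}\frac{\eta}{4(2-\eta)}e^{-x/\eta}$ is a textbook Gamma integral, $\int_0^\infty x^{2/\alpha}e^{-x/\eta}\,{\rm d}x = \eta^{2/\alpha+1}\Gamma(2/\alpha+1)$, which immediately yields the $\frac{\eta}{2-\eta}\Gamma(2/\alpha+1)$ contribution inside the braces of \eqref{Eq:Function_G}. The real work is concentrated in the second term, where I must evaluate integrals of the form $\int_0^\infty x^{\nu}e^{-x/2}{\rm Ei}(bx)\,{\rm d}x$ with $\nu\in\{2/\alpha,\,2/\alpha+1\}$ and $b \triangleq 1/2 - 1/\eta$. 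The key observation making this tractable is that, since $\eta = 1\pm\rho \in (0,2)$, one has $b < 0$, so the exponential integral admits the representation ${\rm Ei}(bx) = -\int_1^\infty u^{-1}e^{-|b|xu}\,{\rm d}u$ with $|b| = (2-\eta)/(2\eta) > 0$.

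With this representation in hand, I would interchange the order of integration (justified by absolute convergence) and perform the inner $x$-integral as another Gamma integral, leaving a single remaining integral $\int_1^\infty u^{-1}(1/2 + |b|u)^{-s}\,{\rm d}u$ with $s = \nu + 1$. The substitution $w = 1/u$ turns this into an Euler-type integral $\int_0^1 w^{s-1}(1+(A/B)w)^{-s}\,{\rm d}w$, which I recognize as the Euler integral representation of $\,_2F_1$, producing $\frac{1}{s}\,_2F_1(s,s;s+1;-\eta/(2-\eta))$. A \emph{Pfaff transformation}, $\,_2F_1(a,b;c;z) = (1-z)^{-a}\,_2F_1(a,c-b;c;z/(z-1))$, then maps the argument $-\eta/(2-\eta)$ to $\eta/2$ and the parameter triple $(s,s;s+1)$ to $(1,s;s+1)$, exactly the form appearing in \eqref{Eq:Function_G}; the accompanying power $(1-z)^{-s}$ cancels against the $|b|^{-s}$ factor to leave a clean $\eta^{s}$. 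Taking $s = 2/\alpha+1$ for the $2$-weighted sub-integral and $s = 2/\alpha+2$ for the $x$-weighted one yields the two hypergeometric terms, and collecting all three contributions while factoring out $\eta^{2/\alpha+1}/4$ reconstructs \eqref{Eq:Function_G} verbatim.

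I expect the \emph{main obstacle} to be this exponential-integral integral and, specifically, pinning down the correct hypergeometric transformation: the substitution naturally produces $\,_2F_1$ at argument $-\eta/(2-\eta)$ with repeated upper parameters, and without the Pfaff step the expression neither matches \eqref{Eq:Function_G} nor manifestly exhibits the convergence guaranteed by $0 < \eta/2 < 1$. Minor care is also needed to verify the interchange of integrals and to track the $(2-x)$ split into its two sub-integrals with the correct signs and weights, but these become routine once the $\,_2F_1$ identity is fixed.
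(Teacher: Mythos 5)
Your proposal is correct and follows essentially the same route as the paper: compute the moment from its definition with the exact PDF of Theorem~\ref{Theorem:Exact_Distribution_y}, reduce to $\int_0^\infty x^{2/\alpha}\mathcal{F}(\eta,x)\,{\rm d}x$, and evaluate the exponential piece as a Gamma integral and the $(2-x)e^{-x/2}{\rm Ei}(\cdot)$ piece via integrals of the form $\int_0^\infty x^{\nu}e^{-px}{\rm Ei}(-qx)\,{\rm d}x$ yielding the $\,_2F_1(1,s;s+1;\eta/2)$ terms. The only difference is that the paper simply cites the tabulated result \cite[eq.~(6.228/2)]{B:Gra_Ryz_Book} for that last integral, whereas you re-derive it from scratch (integral representation of ${\rm Ei}$, Fubini, Euler integral, Pfaff transformation) --- a correct, if more laborious, substitute for the table lookup.
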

\begin{proof}
Starting from the definition of the expectation operator and using \eqref{Eq:Exact_PDF_g} and \eqref{Eq:Function_F}, \eqref{Eq:Function_G} can be easily obtained with the use of \cite[eq. (3.381/4)]{B:Gra_Ryz_Book}, \cite[eq. (6.228/2)]{B:Gra_Ryz_Book} and after some elementary algebraic manipulations. 
\end{proof}

\begin{corollary}\label{Corollary:Aproximate_Moment}\rm
A closed-form approximate expression for $\mathbb{E}\{{\rm g}^{2/\alpha}\}$ for the case where $n_{\rm T}\geq2$ is obtained as 
\begin{equation}\label{Eq:Approximate_Moment_2VSa_y}
\mathbb{E}\{{\rm g}^{2/\alpha}\} = \sigma_{\rm g}^{2/\alpha}\Gamma\left(\frac{2}{\alpha}+1\right).
\end{equation}
\end{corollary}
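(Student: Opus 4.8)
The plan is to compute the fractional moment directly from the approximate exponential PDF established in Theorem~\ref{Theorem:Approximate_Distribution_y}, since that theorem has already reduced the problem to a tractable density. First I would start from the definition of the fractional moment,
\begin{equation}
\mathbb{E}\{{\rm g}^{2/\alpha}\} = \int_0^\infty x^{2/\alpha} f_{\rm g}(x)\,{\rm d}x,
\end{equation}
and substitute the exponential density $f_{\rm g}(x)=\sigma_{\rm g}^{-1}\exp(-x/\sigma_{\rm g})$ from \eqref{Eq:Approximate_PDF_g}. This substitution collapses the entire derivation to the evaluation of a single elementary integral.

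The key step is to recognize the resulting integral as a standard Gamma-function integral. Applying the change of variables $u=x/\sigma_{\rm g}$ (so that $x=\sigma_{\rm g}u$ and ${\rm d}x=\sigma_{\rm g}\,{\rm d}u$), the scale parameter factors out cleanly and I would obtain
\begin{equation}
\mathbb{E}\{{\rm g}^{2/\alpha}\} = \sigma_{\rm g}^{2/\alpha}\int_0^\infty u^{2/\alpha}\exp(-u)\,{\rm d}u.
\end{equation}
The remaining integral matches the integral representation of the Gamma function \cite[eq. (8.310/1)]{B:Gra_Ryz_Book} evaluated at argument $2/\alpha+1$, which yields $\Gamma(2/\alpha+1)$ and hence the claimed expression \eqref{Eq:Approximate_Moment_2VSa_y}.

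Because the hard part---the exponential approximation of the PDF of ${\rm g}$---has already been carried out in Theorem~\ref{Theorem:Approximate_Distribution_y}, I do not anticipate any genuine obstacle here; what remains is an essentially routine moment computation for an exponential random variable. The only point worth flagging is that the result inherits the \emph{approximate} nature of the underlying density, so \eqref{Eq:Approximate_Moment_2VSa_y} should be read as exact under the exponential model of Theorem~\ref{Theorem:Approximate_Distribution_y} and as an approximation to the true fractional moment, consistent with the tightness observed in Fig.~\ref{Fig:PDF_nT}.
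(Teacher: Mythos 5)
Your proposal is correct and follows essentially the same route as the paper: substitute the approximate exponential PDF of Theorem~\ref{Theorem:Approximate_Distribution_y} into the definition of the fractional moment and evaluate the resulting Gamma-function integral (the paper cites \cite[eq.\ (3.381/4)]{B:Gra_Ryz_Book} where you perform the substitution $u=x/\sigma_{\rm g}$ explicitly, but the computation is identical). Your closing remark that the result is exact under the exponential model and approximate for the true moment is consistent with the paper's framing.
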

\begin{proof}
Substituting \eqref{Eq:Approximate_PDF_g} and \eqref{Eq:sigma_g} into the definition of the expectation operator and making use of \cite[eq. (3.381/4)]{B:Gra_Ryz_Book} completes the proof.
\end{proof}
\textit{Remark 2:} Similar to the remark for Theorem~\ref{Theorem:Approximate_Distribution_y}, for the special case where $\mathbf{R}=\mathbf{I}_n$ $\forall$ $n=2,3,\ldots,n_{\rm T}$, Corollary~\ref{Corollary:Aproximate_Moment} presents an exact expression for $\mathbb{E}\{{\rm g}^{2/\alpha}\}$ by setting $\sigma_{\rm g}=1$.

To derive a closed-form exact expression for $P_{\rm suc}$ for the special case $n_{\rm T}=2$, \eqref{Eq:LT_1} is substituted into \eqref{Eq:Coverage_3} yielding
\begin{equation}\label{Eq:Coverage_nT2}
\begin{split}
P_{\rm suc} = \sum_{\ell=1}^2&\frac{1-(-1)^\ell\rho}{2\rho}\exp\left\{-\frac{r_{0,0}^{\alpha}\sigma^2\gamma_{\rm th}}{{\rm P}\left[1-(-1)^\ell\rho\right]}\right\}
\\&\times\exp\left\{-\xi\lambda r_{0,0}^2\left[\frac{\gamma_{\rm th}}{1-(-1)^\ell\rho}\right]^{2/\alpha}\right\}
\end{split}
\end{equation}
where $\xi$ is given by \eqref{Eq:xi} after substituting \eqref{Eq:Exact_Moment_2VSa_y} with \eqref{Eq:Function_G}. Similarly, substituting \eqref{Eq:LT_1} using \eqref{Eq:xi} and \eqref{Eq:Approximate_Moment_2VSa_y} into \eqref{Eq:Coverage_3}, the following closed-form approximation for $P_{\rm suc}$ is obtained
\begin{equation}\label{Eq:Coverage_nT}
\begin{split}
&P_{\rm suc} = 1-\sum_{\ell=1}^{n_{\rm T}} \prod_{\substack{i=1\\ i \neq \ell}}^{n_{\rm T}}\left(1-\frac{\left[\boldsymbol{\mu}\right]_i}{\left[\boldsymbol{\mu}\right]_\ell} \right)^{-1}
\\&\times\left\{1-\exp\left(-\frac{r_{0,0}^{\alpha}\sigma^2\gamma_{\rm th}}{{\rm P}\left[\boldsymbol{\mu}\right]_\ell}\right)\exp\left[-\xi\lambda r_{0,0}^2\left(\frac{\gamma_{\rm th}}{\left[\boldsymbol{\mu}\right]_\ell}\right)^{2/\alpha}\right]\right\}.
\end{split}
\end{equation}

\subsection{Area Spectral Efficiency}
Using the previously derived expressions for $P_{\rm suc}$, the area spectral efficiency of the considered downlink wireless network is straightforwardly obtained as
\begin{equation}\label{Eq:Area_Spectral_Efficiency}
{\rm ASE} = \lambda\log_2\left(1+\gamma_{\rm th}\right)P_{\rm suc}.
\end{equation}
By substituting \eqref{Eq:Coverage_nT2} for $n_{\rm T}=2$ and \eqref{Eq:Coverage_nT} for $n_{\rm T}\geq2$ in \eqref{Eq:Area_Spectral_Efficiency}, a closed-form exact and a closed-form approximate expression, respectively, for ${\rm ASE}$ can be easily derived.

\section{Performance Evaluation Results}\label{sec:Results}
In this section, numerically evaluated results for the performance expressions provided in Section~\ref{sec:Analysis} are presented and compared with equivalent results obtained by means of computer simulations. In particular, we numerically evaluate both the exact and the approximate expressions for $P_{\rm suc}$ given by \eqref{Eq:Coverage_nT2} and \eqref{Eq:Coverage_nT}, respectively, as well as those for ${\rm ASE}$ obtained from \eqref{Eq:Area_Spectral_Efficiency}. Without loss of generality, in the performance evaluation results that follow, exponential correlation matrices have been considered and it has been assumed that $\alpha=3.5$, $r_{0,0}=1$ and $\sigma^2/{\rm P}=1$.
\begin{figure}[!t]
\centering
\includegraphics[keepaspectratio,width=3.1in]{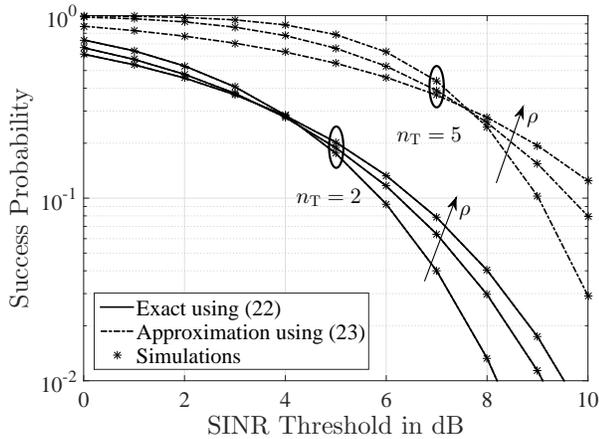}
\caption{Success probability, $P_{\rm suc}$, as a function of the SINR threshold $\gamma_{\rm th}$ in dB for $\lambda=10^{-4}$, $n_{\rm T}=2$ and $5$, as well as for $\rho=\{0.01,0.7,0.95\}$.}
\label{Fig:P_success}
\end{figure}

\begin{figure}[!t]
\centering
\includegraphics[keepaspectratio,width=3.1in]{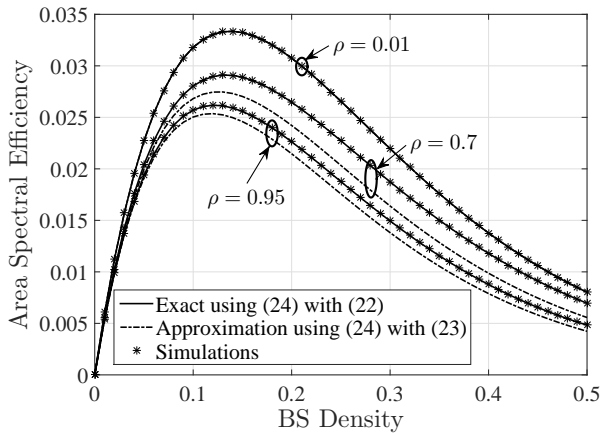}
\caption{Area spectral efficiency, ${\rm ASE}$, as a function of the BS density $\lambda$ for $\gamma_{\rm th}=3$ dB, $n_{\rm T}=2$, and $\rho=\{0.01,0.7,0.95\}$.}
\label{Fig:S_e_nT2}
\end{figure}

In Fig$.$~\ref{Fig:P_success}, $P_{\rm suc}$ is plotted versus $\gamma_{\rm th}$ for $\lambda=10^{-4}$, $n_{\rm T}=2$ and $5$, and for $\rho=\{0.01,0.7,0.95\}$. As shown for all cases, the numerically evaluated results for the derived exact ($n_{\rm T}=2$) and the approximate ($n_{\rm T}=5$) $P_{\rm suc}$ match perfectly and are sufficiently close, respectively, with equivalent simulations. In addition, it is demonstrated that $P_{\rm suc}$ improves with decreasing $\gamma_{\rm th}$ and increasing $n_{\rm T}$. Interestingly, there exists a critical $\gamma_{\rm th}$ value, $\gamma_{\rm th}^*$, (crossing point in the figure) that increases with increasing $n_{\rm T}$ that determines the impact of $\rho$ on $P_{\rm suc}$. More specifically, for $\gamma_{\rm th}\leq\gamma_{\rm th}^*$, $P_{\rm suc}$ improves with decreasing  $\rho$. However, when $\gamma_{\rm th}>\gamma_{\rm th}^*$,  as $\rho$ decreases $P_{\rm suc}$ degrades.    

By setting $\gamma_{\rm th}=3$ dB in Fig$.$~\ref{Fig:S_e_nT2}, ${\rm ASE}$ is illustrated as a function of $\lambda$ for $n_{\rm T}=2$ and $\rho=\{0.01,0.7,0.95\}$. As shown in this figure, there exists a perfect match between the evaluated results for the derived exact ${\rm ASE}$ and equivalent simulations. In addition, the derived approximation appears to be a tight lower bound, which tightness increases as $\rho$ tends to either of its two extreme values (i$.$e$.$ $\rho\rightarrow0$ or $\rho\rightarrow1$) as well as when $\lambda$ decreases. It is also demonstrated that ${\rm ASE}$ degrades with increasing $\rho$ and its maximum value happens for lower values of $\lambda$ as $\rho$ increases. The same performance metric over $\lambda$ is illustrated in Fig$.$~\ref{Fig:S_e_nT5-20} for cases where $\rho=0.7$ and $0.95$, 
$n_{\rm T}$ takes the values $5$ and $20$, and $\gamma_{\rm th}$ is increased to $5$ dB. In this figure, it is shown that ${\rm ASE}$ improves with increasing $n_{\rm T}$. In contrast to Fig$.$~\ref{Fig:S_e_nT2}, as $\rho$ increases, ${\rm ASE}$ is shown to improve. Moreover, the maximum value for ${\rm ASE}$ happens for lower values of $\lambda$ as $\rho$ decreases. As also discussed in the description of Fig$.$~\ref{Fig:P_success}, the latter behavior implies that, for the parameters' setting in Figs$.$~\ref{Fig:S_e_nT2} and \ref{Fig:S_e_nT5-20}, $\rho$ has different impact on ${\rm ASE}$. Finally, it is noted that, as demonstrated in Fig$.$~\ref{Fig:S_e_nT5-20}, the impact of $\rho$ on ${\rm ASE}$ increases with increasing $n_{\rm T}$.
\begin{figure}[!t]
\centering
\includegraphics[keepaspectratio,width=3.1in]{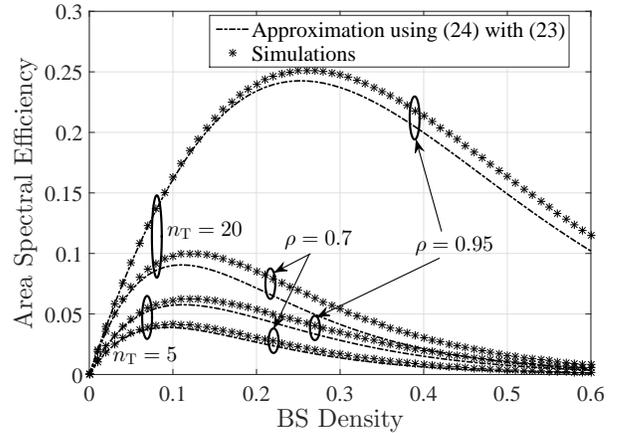}
\caption{Area spectral efficiency, ${\rm ASE}$, as a function of the BS density $\lambda$ for $\gamma_{\rm th}=5$ dB, $n_{\rm T}=5$ and $20$, as well as for $\rho=0.7$ and $0.95$.}
\label{Fig:S_e_nT5-20}
\end{figure}

\section{Conclusion}\label{sec:Conclusion}
In this paper, we have analyzed the performance of MRT in a wireless Poisson network assuming spatially arbitrarily correlated Rayleigh fading channels. Novel closed-form expressions for the distribution of the power of the interference resulting from the coexistence of one intended and one unintended MRT over the considered fading channels were first presented. Based on these expressions and using tools from stochastic geometry, we further derived closed-form expressions for the success probability and area spectral efficiency of the downlink network under investigation. Our theoretical analysis, which is corroborated with simulation results, shed light on how spatial fading correlation affects the network performance depending on the interplay between the density of BSs and the target SINR threshold. Interestingly, depending on the background noise level, spatial fading correlation may increase or decrease the network throughput, having a phase transition for a certain SINR threshold value.

\appendices
\renewcommand{\theequation}{A.\arabic{equation}}
\setcounter{equation}{0}
\section{Proof of Theorem~\ref{Theorem:Exact_Distribution_y}}\label{App:Proof_Theorem_1}
Starting from \eqref{Eq:g_variable} and using the definition of the channel gain vectors $\mathbf{\tilde{h}}_{0,i}$ and $\mathbf{\tilde{h}}_{i,i}$ in Section~\ref{sec:System_Model}, $g_i$ $\forall$ $i\in\mathcal{N}_{+}$ can be equivalently expressed as $g_i=\mathbf{h}_{0,i}^{\rm H}\mathbf{w}_{i}$, with the random vector $\mathbf{w}_i\in\mathcal{C}^{2}$ defined as
\begin{equation}\label{Eq:w}
\mathbf{w}_i = \left(\mathbf{R}_{0,i}^{1/2}\right)^{\rm H}\mathbf{R}_{i,i}^{1/2}\frac{\mathbf{h}_{i,i}}{\|\mathbf{R}_{i,i}^{1/2}\mathbf{h}_{i,i}\|}.
\end{equation}
It is obvious that $\overline{g}_i=\mathbf{w}_{i}^{\rm H}\mathbf{h}_{0,i}$ $\forall$ $i\in\mathcal{N}_{+}$ conditioned on $\mathbf{h}_{i,i}$ is a linear functional of the standard circularly-symmetric jointly-Gaussian random vector $\mathbf{h}_{0,i}$, and hence it is a circularly-symmetric Gaussian RV \cite[Sec. 3.7]{B:Gallager}. The mean value of each $\overline{g}_i$ can be easily verified to be independent of $\mathbf{h}_{i,i}$ and given by 
\begin{equation}\label{Eq:mean_g}
\psi = \sum_{n=1}^{n_{\rm T}} \mathbb{E}\left\{\left[\mathbf{h}_{0,i}\right]_n\right\}\left[\mathbf{w}_i^{\rm H}\right]_n \stackrel{(b)}{=} 0
\end{equation}
where $(b)$ follows from our assumption on the expectation of the elements of $\mathbf{h}_{0,i}$ in Section~\ref{sec:System_Model}. In addition, it can be shown that each $\left|g_i\right|^2$ conditioned on $\mathbf{h}_{i,i}$ is exponentially distributed with scale parameter obtained as 
\begin{equation}\label{Eq:conditional_variance_g}
\mathbb{E}\left\{\left|g_i\right|^2\left.\right|\mathbf{h}_{i,i}\right\} = \frac{\mathbf{h}_{i,i}^{\rm H}\mathbf{R}^2\mathbf{h}_{i,i}}{\mathbf{h}_{i,i}^{\rm H}\mathbf{R}\mathbf{h}_{i,i}}.
\end{equation}
In the derivation of the latter expression, we have used the notation $\mathbf{R}^2\triangleq\mathbf{R}\mathbf{R}$ and the fact that $\mathbb{E}\{\mathbf{h}_{0,i}\mathbf{h}_{0,i}^{\rm H}\}=\mathbf{I}_2$. 

To obtain an analytical expression for the distribution of $\left|g_i\right|^2$ $\forall$ $i\in\mathcal{N}_{+}$, we need first to derive an analytical expression for the distribution of the generalized Rayleigh quotient that appears in the right-hand side of \eqref{Eq:conditional_variance_g}, i$.$e$.$ for the positive real-valued RV $r_i=\mathbf{h}_{i,i}^{\rm H}\mathbf{R}^2\mathbf{h}_{i,i}/\mathbf{h}_{i,i}^{\rm H}\mathbf{R}\mathbf{h}_{i,i}$. Using the results of \cite{J:Alexandg2015}, a closed-form expression for the PDF of each $r_i$ can be obtained as 
\begin{equation}\label{Eq:PDF_r}
f_{r_i}(y) =
\left\{
\begin{array}{ll}
\frac{1-\rho^2}{2\rho}\left(2-y\right)^{-2},&1-\rho<y<1+\rho\\
0,&{\textrm{otherwise}}
\end{array}
\right..
\end{equation}
It is noted that, as shown in \eqref{Eq:PDF_r}, the PDF of each $r_i$ depends only on $\rho$, and hence is identical for all $r_i$'s (by assumption, $\mathbf{R}$ is common for $\mathbf{\tilde{h}}_{i,i}$ $\forall$ $i\in\mathcal{N}_{+}$). By using the exponential conditional PDF of $\left|g_i\right|^2$ and \eqref{Eq:PDF_r}, the unconditional PDF of each $\left|g_i\right|^2$ can be finally computed as
\begin{equation}\label{Eq:Exact_PDF_y_1}
\begin{split}
f_{\rm g}(x) &= \int_{0}^{\infty}\frac{1}{y}\exp\left(-\frac{x}{y}\right)f_{r_i}(y){\rm d}y
\\&=\frac{1-\rho^2}{2\rho}\int_{1-\rho}^{1+\rho}\frac{\exp\left(-\frac{x}{y}\right)}{y\left(2-y\right)^2}{\rm d}y,
\end{split}
\end{equation}
which using \cite[eqs. (3.352)]{B:Gra_Ryz_Book} and after some algebraic manipu\-lations yields \eqref{Eq:Exact_PDF_g}. 

\renewcommand{\theequation}{B.\arabic{equation}}
\setcounter{equation}{0}
\section{Proof of Theorem~\ref{Theorem:Approximate_Distribution_y}}\label{App:Proof_Theorem_2}
Both $\mathbf{R}^2$ and $\mathbf{R}$ that appear in the scale parameter described by \eqref{Eq:conditional_variance_g} are real symmetric matrices, and hence their eigenvalue decomposition is computed as $\mathbf{R}^2=\mathbf{V}{\rm diag}\left\{\boldsymbol{\nu}\right\}\mathbf{V}^{\rm H}$ and $\mathbf{R}=\mathbf{V}{\rm diag}\left\{\boldsymbol{\mu}\right\}\mathbf{V}^{\rm H}$, respectively, with $\mathbf{V}\in\mathcal{C}^{n_{\rm T}\times n_{\rm T}}$ being an orthogonal matrix and $\left[\boldsymbol{\nu}\right]_n=\left[\boldsymbol{\mu}\right]_n^2$ $\forall$ $n=2,3,\dots,n_{\rm T}$. By introducing the random vector $\mathbf{q}_i=\mathbf{V}^{\rm H}\mathbf{h}_{i,i}$, which can be ea\-sily shown to be circularly-symmetric jointly-Gaussian, \eqref{Eq:conditional_variance_g} can be re-expressed as $\mathbb{E}\{\left|g_i\right|^2\left.\right|\mathbf{h}_{i,i}\}=\mathbb{E}\{\left|g_i\right|^2\left.\right|r_i\}=r_i$, where the positive real-valued RV $r_i$ that was introduced in the previous appendix is hereinafter equivalently given by
\begin{equation}\label{Eq:r}
r_i = \sum_{n=1}^{n_{\rm T}}\frac{\left[\boldsymbol{\mu}\right]_n^2\left|\left[\mathbf{q}_i\right]_n\right|^2}{\left[\boldsymbol{\mu}\right]_n\left|\left[\mathbf{q}_i\right]_n\right|^2}.
\end{equation}

To derive an approximate expression for the PDF of each $\left|g_i\right|^2$, we make use of one of the results presented in Appendix~\ref{App:Proof_Theorem_1} according to which, the PDF of each $\left|g_i\right|^2$ is an exponential one with scale parameter depending on $\mathbf{h}_{i,i}$, or equivalently on $r_i$ as expressed as \eqref{Eq:r}. We propose the approximation of the PDF of $\left|g_i\right|^2$ $\forall$ $i\in\mathcal{N}_{+}$ with an exponential PDF having as a scale parameter the $\mathbb{E}\{\left|g_i\right|^2\}=\mathbb{E}\{r_i\}$, which does not depend on $r_i$. To this end, $\mathbb{E}\{\left|g_i\right|^2\}$ can be approximated as follows
\begin{equation}\label{Eq:Expectation_Approximations}
\begin{split}
\mathbb{E}\{\left|g_i\right|^2\}&\cong\mathbb{E}\left\{\sum_{n=1}^{n_{\rm T}}\left[\boldsymbol{\mu}\right]_n^2\omega_{i,n}\right\}  \mathbb{E}\left\{\left(\sum_{n=1}^{n_{\rm T}}\left[\boldsymbol{\mu}\right]_n\omega_{i,n}\right)^{-1}\right\}
\\&\cong\mathbb{E}\left\{\sum_{n=1}^{n_{\rm T}}\left[\boldsymbol{\mu}\right]_n^2\omega_{i,n}\right\}/\mathbb{E}\left\{\sum_{n=1}^{n_{\rm T}}\left[\boldsymbol{\mu}\right]_n\omega_{i,n}\right\}
\end{split}
\end{equation}
where $\omega_{i,n}=\left|\left[\mathbf{q}_i\right]_n\right|^2$. By using the fact that $\mathbb{E}\{\omega_{i,n}\}=1$ $\forall$ $i\in\mathcal{N}_{+}$ and $\forall$ $n=2,3,\dots,n_{\rm T}$, the right-hand side of \eqref{Eq:Expectation_Approximations} can be expressed $\forall$ $i\in\mathcal{N}_{+}$ as in \eqref{Eq:sigma_g}.


\bibliographystyle{IEEEtran}
\bibliography{IEEEabrv,references}

\end{document}